\newtheorem{theorem}{Theorem}
\newtheorem{proposition}{Proposition}
\begin{document}

\title{Half-Duplex Relaying for the Multiuser Channel}

\author{Ming Lei and Mohammad Reza Soleymani\\
Electrical and Computer Engineering\\Concordia University, Montreal, Quebec, Canada\\
Email:m\_lei,msoleyma@ece.concordia.ca
}

\maketitle

\begin{abstract}
This work focuses on studying the half-duplex (HD) relaying in the Multiple Access Relay Channel (MARC) and the Compound Multiple Access Channel with a Relay (cMACr). A generalized Quantize-and-Forward (GQF) has been proposed to establish the achievable rate regions. Such scheme is developed based on the variation of the Quantize-and-Forward (QF) scheme and single block with two slots coding structure. The results in this paper can also be considered as a significant extension of the achievable rate region of Half-Duplex Relay Channel (HDRC). Furthermore, the rate regions based on GQF scheme is extended to the Gaussian channel case. The scheme performance is shown through some numerical examples.

\end{abstract}

\section{Introduction}

Relaying has been shown to be beneficial to a conventional point-to-point communication channel by cooperating with the transmitter \cite{Cover1979}. Meanwhile, it is also proved that the relaying can improve the sum achievable rates for a multiple access channel \cite{Kramer2005} and a compound multiple access channel \cite{Gunduz2010}. 

One of the fundamental relaying schemes proposed in \cite{Cover1979} is compress-and-forward (CF). Comparing to the decode-and-forward (DF) based schemes, CF based schemes are not limitated by the decoding capability of the relay as studied in \cite{Cover1979} and \cite{Gunduz2010}. Different variations of the CF based schemes have been investigated in \cite{Cover2007,Razaghi2013,Avestimehr2011,Lim2011}.
In \cite{Lim2011}, a Noisy network coding (NNC) scheme was proposed. NNC is able to  recover the rate achieved by the classsic CF in a three-node relay channel and generally outperforms other CF based schemes in a multiuser channel. On the other hand, NNC requires the "long" sources messages to be repeated several times which significantly increases the decoding delay. In order to overcome this drawback, a short message noisy network coding (SNNC) was introduced by Wu and Xie\cite{Wu2010}\cite{Wu2013}. In SNNC, each source transmits independent short messsage in each block. In \cite{Hou2012} and \cite{Kramer2011}, SNNC was also shown to achieve the same rate as NNC in multiple multicast sessions.


Usually, the SNNC is applied in the full-duplex relay channel. Motivated by the practical constraint that relay cannot transmit and receive simultaneously in wireless communications \cite{Laneman2003}, a quantize-and-forward (QF) scheme, originates from NNC \cite{Lim2011}, has been studied for a fading half-duplex relay channel (HDRC) in \cite{Yao2013}. A single block and two slots coding based QF scheme for the HDRC has been proposed in \cite{Yao2013} to derive the achievable rates. In this paper, the "cheap" half-duplex relay \cite{Khojastepour2003} is also considered. Specifically, the achievable rate regions for the half-duplex MARC and cMACr, as shown in Fig.\ref{fig:MARC-phases} and Fig.\ref{fig:phases}, respectively, are investigated based on a variation of the QF scheme, namely the GQF scheme. Compared with the QF scheme, the proposed GQF scheme not only adopts the single block two slots coding structure but also takes into account the effect of the co-exist interfered message signal at relay. Comparing with the classic CF based schemes, the benefit is that the proposed GQF scheme is able to simplify the operation at relay ("cheaper" relay) while keeping the advantage of the CF based schemes. The GQF scheme only requires a simplified relay in the sense that no Wyner-Ziv binning is necessary. Correspondingly, the GQF scheme performs joint decoding instead of sequential decoding. The GQF scheme simplifies the relay encoding and can be implemented in any situations where a low-cost half-duplex relay is needed.

\begin{figure}[htpb]
\centering
\subfloat[MARC Slot-1]           {\label{fig:MARC-phase1}
\includegraphics[width=1.5in]{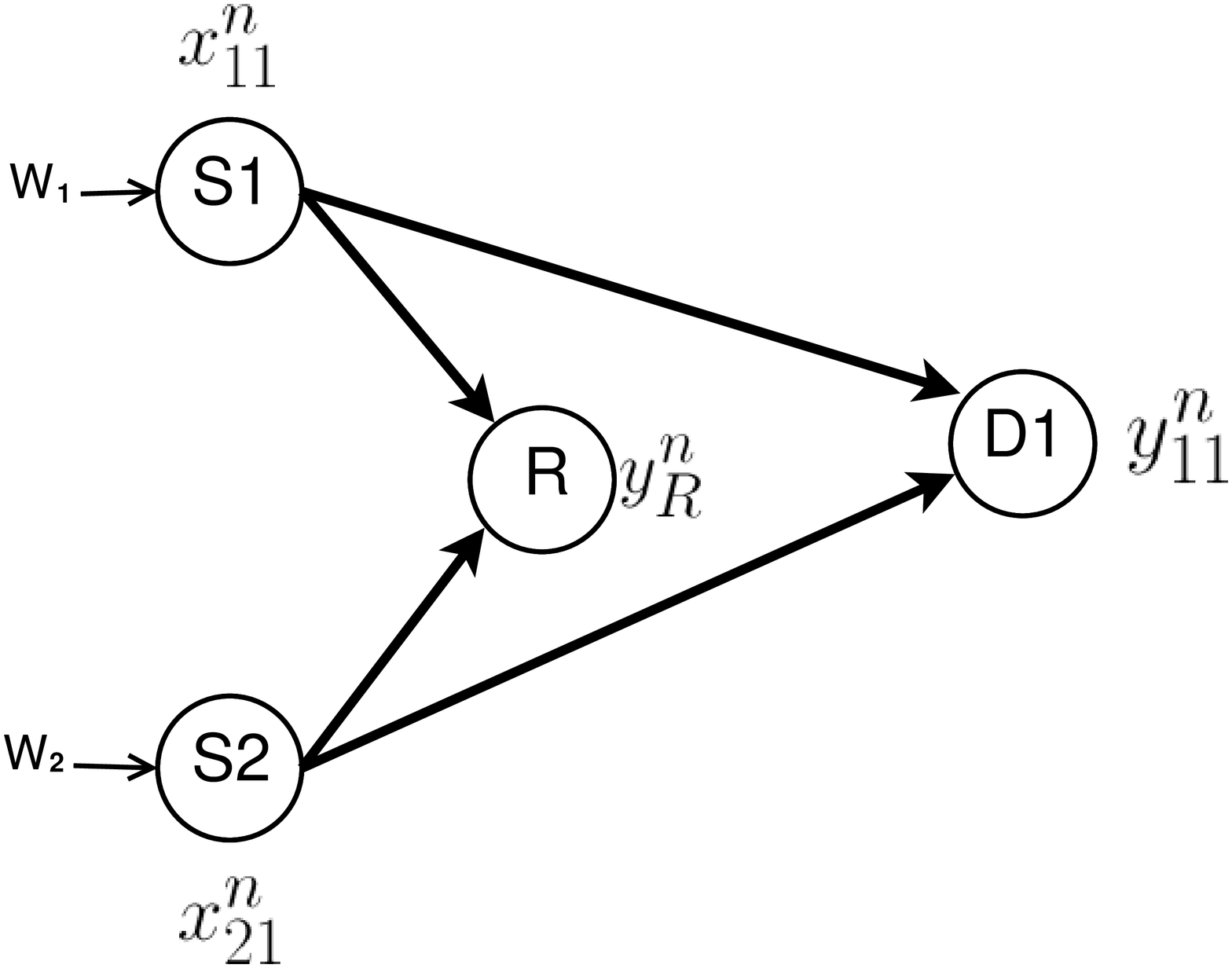}   }
\qquad
\subfloat[MARC Slot-2]{\label{fig:MARC-phase2}
\includegraphics[width=1.5in]{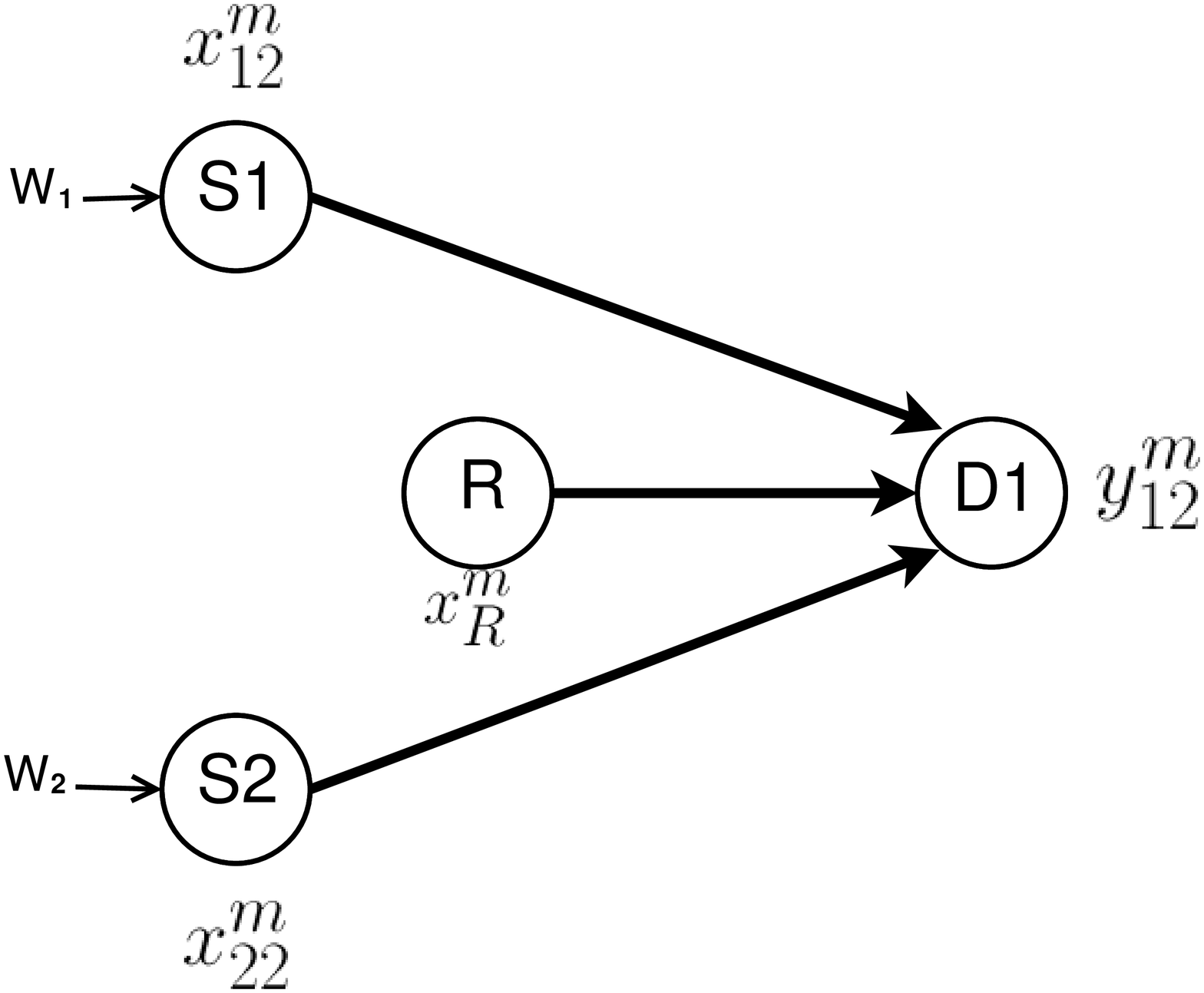}   }
\caption{Message flow of the Half-Duplex Multiple Access Relay Channel. $R$ listens to the channel in Slot-1 and Transmits to $D_1$ in Slot-2}
\label{fig:MARC-phases}
\end{figure}

\begin{figure}[htpb]
\centering
\subfloat[Slot-1]           {\label{fig:phase1} \includegraphics[width=1.5in]{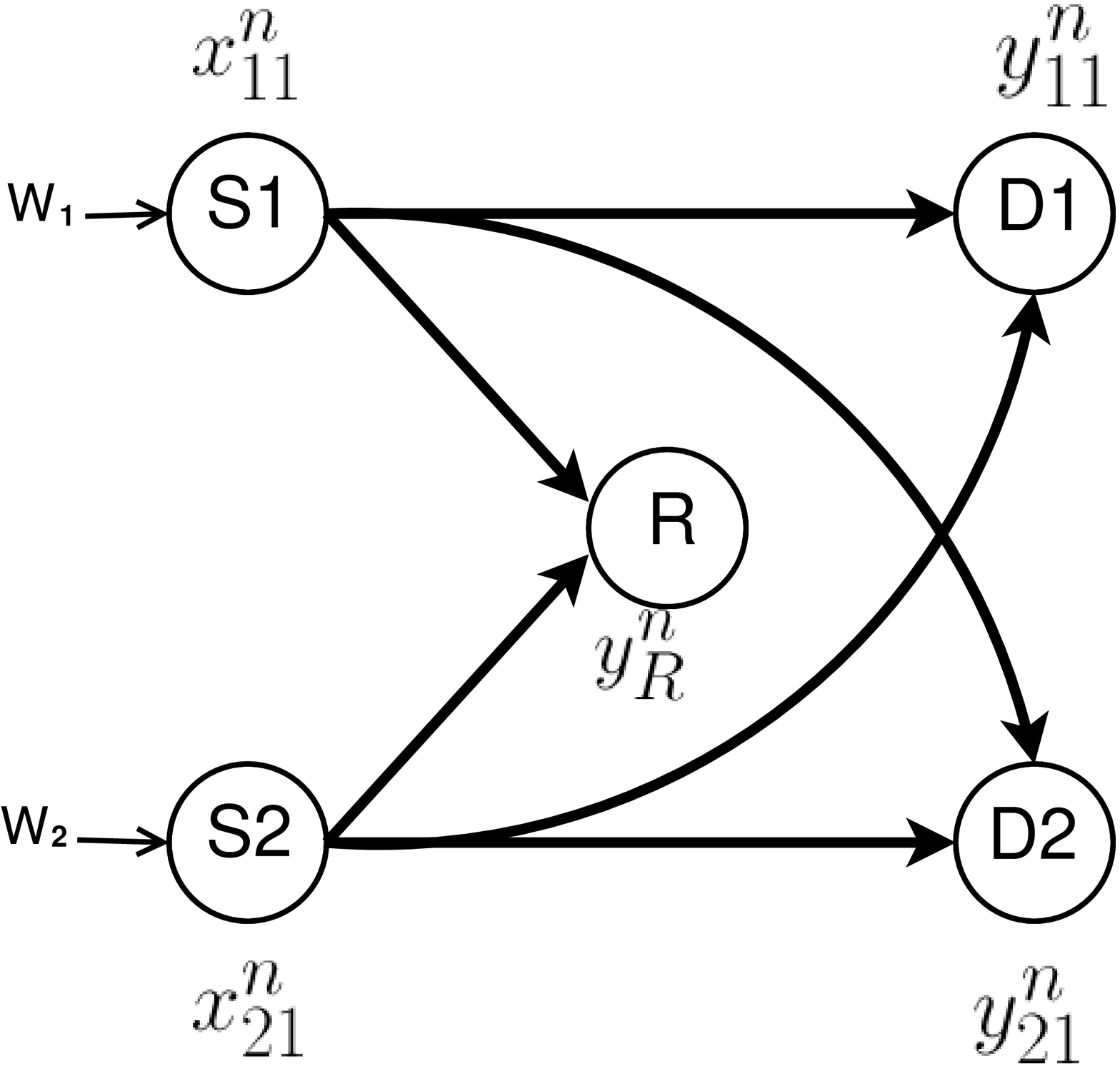}   }
\qquad
\subfloat[Slot-2]{\label{fig:phase2}
\includegraphics[width=1.5in]{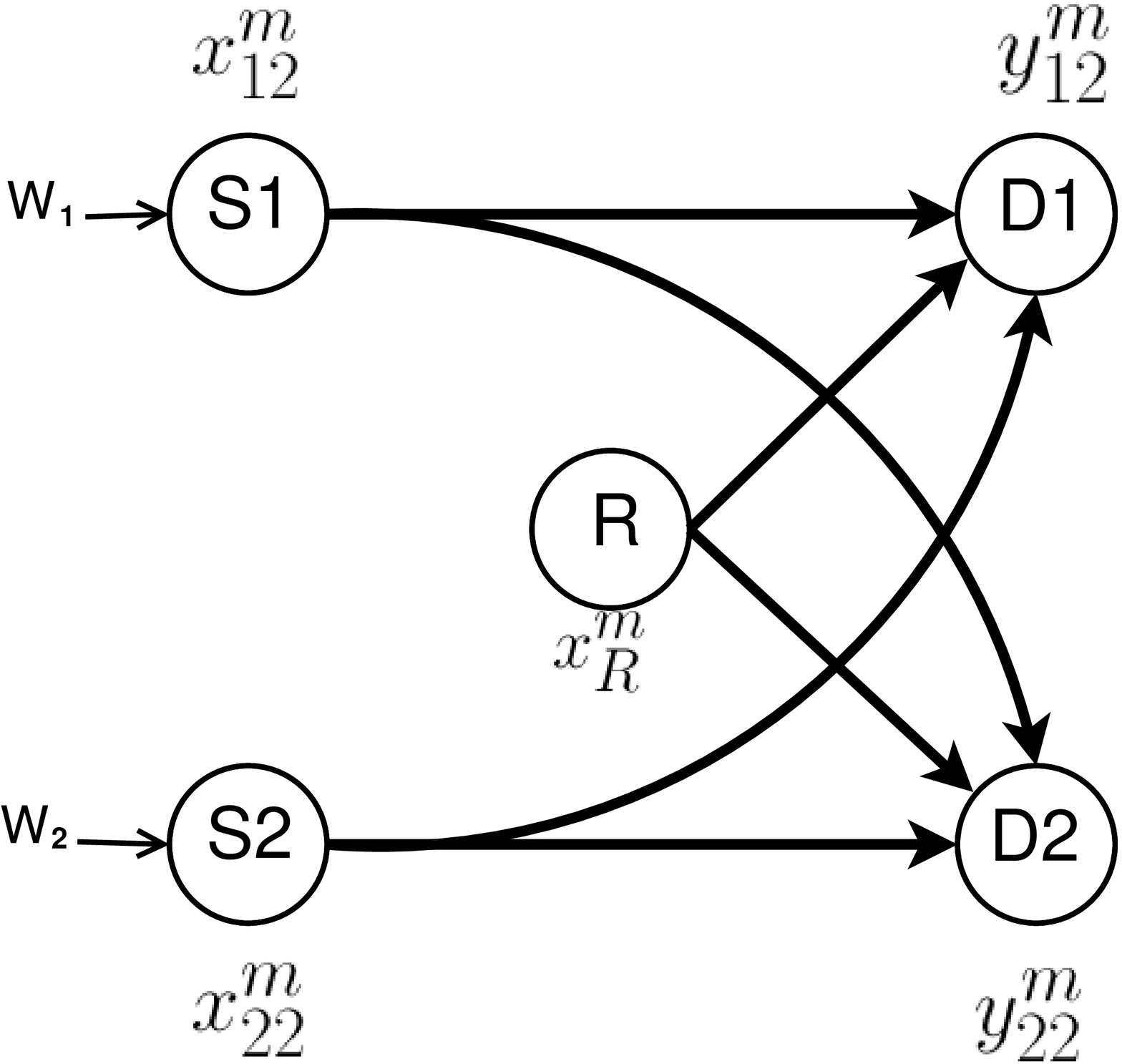}   }
\caption{Message flow of the cMACr. $R$ listens to the channel in Slot-1 and Transmits to both $D_1$ and $D_2$ in Slot-2}
\label{fig:phases}
\end{figure}

For the channels as shown in Fig.\ref{fig:MARC-phases} and Fig.\ref{fig:phases} using GQF, the relay quantizes the signal received from both sources and forwards the quantization index without using Wyner-Ziv coding (binning). Each destination decodes both messages from checking the joint typicality of the signal received in both slots without decoding the quantization index.

The performance comparison between the GQF scheme and the CF scheme is also discussed in this paper. However, in order to fit in the half-duplex MARC and cMACr, the classic CF scheme has been modified. Furthermore, the rate regions based on GQF scheme are extended from the discrete memoryless channel to the Gaussian channel. The scheme performance is shown through some numerical examples.

\section{System Model}

Two discrete memoryless channel models, a half-duplex multiple access relay channel and a half-duplex compound multiple access channel with a relay, are considered in this work. Since a cMACr naturally reduces to a MARC if the second destination is not present, in the following only the description for the HD-cMACr will be shown. The HD-MARC can be described by adjusting the random variables based on the absence of the second destination.

As shown in Fig. \ref{fig:phases}, a cMACr consists two sources $S_{1}$, $S_{2}$ and two destinations $D_{1}$ and $D_{2}$. Relay $R$ helps the information propagation from sources to destinations by cooperating with the sources. Relay operates in the Half-Duplex mode. This means that $R$ is either receiving signals from the source nodes ($S_{1}$ and $S_{2}$) or transmitting to the destinations ($D_{1}$ and $D_{2}$). Assume that each block length is totally $l$ channel uses and has two slots. The first slot ($R$ listens to the channel) and the second slot ($R$ transmits to the channel) of a single block are of $n$ and $m$ channel uses, respectively. Hence, the number of channel uses in one block is $l=n+m$.

Each source $S_i$, $i=1,2$ chooses a message $W_i$ from a message set $\mathcal{W}_{i}=\{1,2,\dots,2^{lR_i}\}$, then encodes this message into a length $n$ codeword with an encoding function $f_{i1}(W_i)=X_{i1}^{n}$ and a length $m$ codeword with an encoding function $f_{i2}(W_i)=X_{i2}^{m}$, finally sends these two codewords in the corresponding slots. Relay $R$ employs an encoding function based on its recetpion $Y_{R}^{n}$ in the first slot.
Each destination uses a decoding function $g_{i}(Y_{i1}^{n},Y_{i2}^{m})=(\hat{W}_1,\hat{W}_2)$ that jointly decodes messages from the receptions in both slots. 
The channel is memoryless such that the channel transition probabilities can be represented by
\begin{equation}
\begin{split}
& p_{Y_{R}^{n}Y_{11}^{n}Y_{21}^{n}| X_{11}^{n}X_{21}^{n}}(y_R^n,y_{11}^n,y_{21}^n| x_{11}^n,x_{21}^n)\\
&=\prod_{i=1}^{n}p_{Y_{R}Y_{11}Y_{21}| X_{11}X_{21}}(y_{R,i},y_{11,i},y_{21,i}| x_{11,i}x_{21,i})\\
\end{split}
\end{equation}
and
\begin{equation}
\begin{split}
& p_{Y_{12}^{m}Y_{22}^{m}| X_{12}^{m}X_{22}^{m}X_{R}^{m}}(y_{12}^m,y_{22}^m| x_{12}^m,x_{22}^m,x_{R}^m)\\
& =\prod_{i=1}^{m}p_{Y_{12}Y_{22}| X_{12}X_{22}X_{R}}(y_{12,i},y_{22,i}| x_{12,i}x_{22,i}x_{R,i}).\\
\end{split}
\end{equation}

A rate pair $(R_1,R_2)$ is called achievable if there exists a message set, together with the encoding and decoding functions stated before such that $Pr(\hat{W}_1\neq W_1 \cup \hat{W}_2\neq W_2)\rightarrow 0$ when $l \rightarrow \infty $.

\section{Main Result}
In this section, the achievable rate regions based on the GQF scheme for the discrete memoryless half-duplex MARC and cMACr are presented first. As a reference, the achievable rates based on a modified CF scheme is shown in the second subsection. In the last part of this section, it is shown that the achievable rate regions for the three node HDRC \cite{Yao2013} can be treated as a special case of the result for our five node cMACr.

\subsection{Achievable Rate Region Based on GQF Scheme}
In this subsection, the achievable rate regions for the discrete memoryless HD-MARC and HD-cMACr will be shown based on the GQF Scheme. The GQF scheme is  an essential variation of the classic CF. In GQF scheme, relay quantizes its observation after the first slot, and then sends the quantization index in the second slot. Unlike the conventional CF, no Wyner-Ziv binning is applied in the relay, which simiplifies the relay operation. At the destination, decoding is also different in the sense that joint-decoding of the messages from both slots without explicitly decoding the quantization index is performed in GQF scheme.

\subsubsection{Achievable Rate Region for discrete memoryless HD-MARC}
In the HD-MARC, there is only one destination $D_1$ comparing to HD-cMACr. $D_1$ receives signals from $S_1$ and $S_2$ in the first slot. It then receives from $S_1$, $S_2$ and $R$ in the second slot. The decoding is done by $D_1$ found both messages $W_1$ and $W_2$ that were sent from the sources. The following theorem describes the achievable rate region for this discrete memoryless HD-MARC:

\begin{theorem}\label{th-QFD-MARC}
The following rate regions are achievable over discrete memoryless HD-MARC based on the GQF scheme:
{\setlength\arraycolsep{0.1em}
\begin{eqnarray}
R_i & < & min\{a_1(i),b_1(i)\}
\label{eqn-QFD111-MARC} \\
R_1+R_2 & < & min \{c_1,d_1\}
\label{eqn-QFD222-MARC}
\end{eqnarray}
}
where  $\beta=n/l$ is fixed,
{\setlength\arraycolsep{0.1em}
\begin{eqnarray}
&a_k(i)=&\beta I(X_{i1};X_{j1},Y_{k1},\hat{Y}_R) + (1-\beta)I(X_{i2};X_{j2},X_R,Y_{k2})
\nonumber \\
&b_k(i)=&\beta[I(X_{i1};X_{j1},Y_{k1})-I(\hat{Y}_R;Y_R | X_{i1},X_{j1},Y_{k1})] 
\nonumber \\
&& \quad + (1-\beta)I(X_{i2},X_{R};X_{j2},Y_{k2})
\nonumber\\
&c_k=&\beta I(X_{11},X_{21};Y_{k1},\hat{Y}_R) + (1-\beta)I(X_{12},X_{22};X_R,Y_{k2})
\nonumber \\
&d_k=&\beta[I(X_{11},X_{21},\hat{Y}_{R};Y_{k1})+I(X_{11},X_{21};\hat{Y}_R)\nonumber \\
&&-I(Y_R;\hat{Y}_R)]+(1-\beta)I(X_{12},X_{22},X_{R};Y_{k2}),\label{set-abcd-MARC}
\end{eqnarray}
}
$i,j,k\in \{1,2\}$ and $i\neq j$,
for all input distributions
\begin{equation}
p(x_{11})p(x_{21})p(x_{12})p(x_{22})p(x_R)p(\hat{y}_R|y_R).
\label{inputD-MARC}
\end{equation}
\end{theorem}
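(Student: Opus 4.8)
The plan is to prove achievability by random coding with joint-typicality decoding, following the quantize-and-forward structure of \cite{Yao2013,Lim2011} but adapted to two sources, the single-block/two-slot format, and \emph{non-unique} decoding of the quantization index. Throughout, $n=\beta l$ and $m=(1-\beta)l$, and since the HD-MARC has a single destination I specialize the formulas of \eqref{set-abcd-MARC} to $k=1$.

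\textbf{Codebook, encoding, and the covering step.} Fix the distribution \eqref{inputD-MARC} and a quantization rate $R_0$. For each source $S_i$ I would draw $2^{lR_i}$ independent pairs $(x_{i1}^n(w_i),x_{i2}^m(w_i))$, the first component i.i.d.\ $\sim p(x_{i1})$ and the second i.i.d.\ $\sim p(x_{i2})$; for the relay I would draw $2^{lR_0}$ sequences $\hat y_R^n(s)\sim\prod p(\hat y_R)$ and, independently, $2^{lR_0}$ sequences $x_R^m(s)\sim\prod p(x_R)$, one transmit codeword per quantization index. After Slot~1 the relay finds an $s$ with $(\hat y_R^n(s),y_R^n)$ jointly typical and sends $x_R^m(s)$ in Slot~2; by the covering lemma this succeeds with high probability provided $R_0>\beta I(Y_R;\hat Y_R)$, so I would fix $R_0$ just above this value. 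The key structural point is that, since the destination never decodes $s$, no upper bound on $R_0$ (hence no relay-to-destination rate constraint) is imposed.

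\textbf{Decoding and the error-event decomposition.} The destination declares $(\hat w_1,\hat w_2)$ if it is the unique pair for which some index $\hat s$ makes both $(x_{11}^n(\hat w_1),x_{21}^n(\hat w_2),\hat y_R^n(\hat s),y_{11}^n)$ and $(x_{12}^m(\hat w_1),x_{22}^m(\hat w_2),x_R^m(\hat s),y_{12}^m)$ jointly typical. Assuming $(1,1,s^\star)$ was sent, the covering step together with the conditional-typicality (Markov) lemma makes the failure of the true triple negligible. I would partition the remaining errors by which messages are wrong ($w_1$ only, $w_2$ only, both) and by whether $\hat s=s^\star$ or $\hat s\neq s^\star$, giving six events. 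Balancing the number of competing codewords against the joint-typicality exponent (a product of a Slot-1 and a Slot-2 contribution) and applying the union bound yields, from the $w_i$-only events, the per-user bound $R_i<\min\{a_1(i),b_1(i)\}$, and from the both-wrong events the sum bound $R_1+R_2<\min\{c_1,d_1\}$. The index-correct cases directly produce $a_1(i)$ and $c_1$ (e.g.\ the Slot-1 exponent $I(X_{i1};X_{j1},\hat Y_R,Y_{11})$ matches $a_1(i)$), while the index-wrong cases produce the raw forms of $b_1(i)$ and $d_1$, which carry an extra $R_0$ on the left because $\hat s$ then also ranges over $2^{lR_0}$ values.

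\textbf{Main obstacle.} The delicate step is reducing the index-wrong exponents to the closed forms in \eqref{set-abcd-MARC}. For the $w_1$-only, $\hat s\neq s^\star$ event the raw constraint is $R_1+R_0<\beta[I(\hat Y_R;X_{21},Y_{11})+I(X_{11};\hat Y_R,X_{21},Y_{11})]+(1-\beta)I(X_{12},X_R;X_{22},Y_{12})$, and I would (i) substitute the covering rate $R_0=\beta I(Y_R;\hat Y_R)$ and (ii) invoke the Markov chain $\hat Y_R - Y_R - (X_{11},X_{21},Y_{11})$ to obtain $I(\hat Y_R;X_{11},X_{21},Y_{11})-I(Y_R;\hat Y_R)=-I(\hat Y_R;Y_R\mid X_{11},X_{21},Y_{11})$, which collapses the bracket to $I(X_{11};X_{21},Y_{11})-I(\hat Y_R;Y_R\mid X_{11},X_{21},Y_{11})$, exactly $b_1(1)$; the symmetric computation gives $b_1(2)$, and the three-codeword version (introducing $X_{11},X_{21},\hat Y_R$ jointly) gives $d_1$. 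Taking the minimum of the index-correct and index-wrong bound for each rate constraint and letting $l\to\infty$ then establishes the region \eqref{eqn-QFD111-MARC}--\eqref{eqn-QFD222-MARC}.
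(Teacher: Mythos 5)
Your proposal is correct and follows essentially the same route as the paper: the same codebook construction, the covering-lemma quantization step with $R_0>\beta I(Y_R;\hat Y_R)$, joint-typicality decoding of $(\hat w_1,\hat w_2)$ without uniquely decoding the quantization index, a union-bound over the message/index error events yielding raw constraints with $R_0$ on the left, and the same Markov-chain identity to collapse them to $b_1(i)$ and $d_1$. The only cosmetic difference is that you argue directly on the MARC with $k=1$, whereas the paper proves the cMACr case (Theorem~\ref{th-QFD}) and obtains Theorem~\ref{th-QFD-MARC} by setting $Y_{21}=Y_{22}=\phi$.
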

\begin{proof}: As stated in section \MakeUppercase{\romannumeral 2}, HD-MARC can be treated as a reduced case of HD-cMACr. Thus, the above results can be obtained by simplifying the proof for the theorem \ref{th-QFD} under the assumption that the receiving random variables $Y_{21}=Y_{22}=\phi$ at $D_2$. The detail of the proof for theorem \ref{th-QFD} can be found in the first part of the Appendix.
\end{proof}
\subsubsection{Achievable Rate Region for discrete memoryless HD-cMACr}
In the HD-cMACr, each destination $D_i$, $i=1,2$, tries to decode both messages $W_1$ and $W_2$. The decoding is done by each of the destination found both messages. 
The following theorem describes the achievable rate region for this discrete memoryless HD-cMACr:
\begin{theorem}\label{th-QFD}
The following rate regions are achievable over discrete memoryless half-duplex cMACr with the GQF scheme:
{\setlength\arraycolsep{0.1em}
\begin{eqnarray}
R_i & < & min\{a(i),b(i)\}
\label{eqn-QFD111} \\
R_1+R_2 & < & min \{c,d\}
\label{eqn-QFD222}
\end{eqnarray}
}
where  $\beta=n/l$ is fixed, $a(i)=min \{a_{1}(i),a_{2}(i)\}$, $b(i)=min \{b_{1}(i),b_{2}(i)\}$, $c=min\{c_1,c_2\}$, $d=min\{d_1,d_2\}$,
{\setlength\arraycolsep{0.1em}
\begin{eqnarray}
a_k(i)&=&\beta I(X_{i1};X_{j1},Y_{k1},\hat{Y}_R) + (1-\beta)I(X_{i2};X_{j2},X_R,Y_{k2})
\nonumber \\
b_k(i)&=&\beta[I(X_{i1};X_{j1},Y_{k1})-I(\hat{Y}_R;Y_R | X_{i1},X_{j1},Y_{k1})] 
\nonumber \\
&+& (1-\beta)I(X_{i2},X_{R};X_{j2},Y_{k2})
\nonumber\\
c_k&=&\beta I(X_{11},X_{21};Y_{k1},\hat{Y}_R) + (1-\beta)I(X_{12},X_{22};X_R,Y_{k2})
\nonumber \\
d_k&=&\beta[I(X_{11},X_{21},\hat{Y}_{R};Y_{k1})+I(X_{11},X_{21};\hat{Y}_R)\nonumber \\
&-&I(Y_R;\hat{Y}_R)] + (1-\beta)I(X_{12},X_{22},X_{R};Y_{k2}),
\label{set-abcd}
\end{eqnarray}
}
$i,j,k\in \{1,2\}$, $i\neq j$ and $k \in \{1,2\}$, 

for all input distributions
\begin{equation}
p(x_{11})p(x_{21})p(x_{12})p(x_{22})p(x_R)p(\hat{y}_R|y_R)
\label{inputD}
\end{equation}
\end{theorem}
\begin{proof}: The detail of the proof is shown in the first part of the Appendix.
\end{proof} 
\textit{Remark 1:} The major difference between the GQF scheme and the CF scheme applied in \cite{Gunduz2010} is that relay does not perform binning after quantize its observation of the sources messages. Moreover, in GQF two destinations perform one-step joint-decoding of both messages instead of sequentially decoding the relay bin index and then the source messages.

\subsection{Achievable Rate Region Based on modified CF Scheme}
In this subsection, as a reference, the achievable rate regions based on the modified CF scheme will be shown in the HD-MARC and HD-cMACr respectively. The modification is done for two parts: First, the relay in the classic CF scheme is now half-duplex; Second, the encoding and decoding at sources and destinations are now using a single block two slots structure.

In this CF scheme, relay quantizes its observation in the end of the first slot. After that, the relay implements Wyner-Ziv binning and sends the bin index in the second slot. The destination sequentially decodes the bin index $\hat{s}\in \mathcal{S}$, quantization index $\hat{u} \in B(\hat{s})$ with the side information and finally the source messages $\hat{w}_1\in \mathcal{W}_1$ and $\hat{w}_2\in \mathcal{W}_2$ jointly from both slots reception. 

The achievable rate regions for discrete memoryless HD-MARC and HD-cMACr based on the modified CF can be summarized in the following:  
\begin{theorem}\label{th-CFD-MARC}
The following rate regions are achievable over discrete memoryless half-duplex MARC based on the modified CF scheme:
{\setlength\arraycolsep{0.1em}
\begin{eqnarray}
R_i & < & a_{1}(i)
\label{eqn-CFD111-MARC} \\
R_1+R_2 & < & c_1
\label{eqn-CFD222-MARC}
\end{eqnarray}
}
subject to
\begin{equation}
\beta [I(Y_R;\hat{Y}_R)-I(Y_{11};\hat{Y}_R)] <(1-\beta)I(X_R;Y_{12})
\label{eqn-CFDcon-MARC}
\end{equation}
where $i \in\{1,2\}$, $a_{1}(i), c_1$ are previously defined as in (\ref{set-abcd}), for all the input distributions as in (\ref{inputD}).
\end{theorem}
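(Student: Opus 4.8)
The plan is to run a standard random-coding argument adapted to the two-slot half-duplex structure, with the relay performing Wyner--Ziv compression of its slot-1 observation and transmitting the resulting bin index over slot-2. I would first fix an input distribution of the form (\ref{inputD}) and introduce two auxiliary rates: a quantization rate $\hat{R}$ for a compression codebook $\{\hat{Y}_R^n(u):u\in[1:2^{n\hat{R}}]\}$ drawn i.i.d.\ from the marginal $p(\hat{y}_R)$, and a bin rate $R_s$ obtained by partitioning the $2^{n\hat{R}}$ quantization indices uniformly into $2^{nR_s}$ bins $B(s)$. The source codebooks $X_{i1}^n(w_i),X_{i2}^m(w_i)$ and the relay codebook $X_R^m(s)$ (one codeword per bin) are generated independently from their respective marginals. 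For encoding, each source sends $X_{i1}^n(W_i)$ then $X_{i2}^m(W_i)$; at the end of slot-1 the relay finds $u$ with $(\hat{Y}_R^n(u),Y_R^n)$ jointly typical, reads off its bin $s$, and transmits $X_R^m(s)$ in slot-2. By the covering lemma this succeeds whenever $\hat{R}>I(Y_R;\hat{Y}_R)$.

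Next I would carry out the three-step sequential decoding at $D_1$ together with its error analysis. In step~(i), $D_1$ decodes the bin index $\hat{s}$ from $Y_{12}^m$; since a wrong $X_R^m(s')$ is independent of $Y_{12}^m$, the packing lemma gives the requirement $R_s<\tfrac{m}{n}I(X_R;Y_{12})$. In step~(ii), within $B(\hat{s})$ it decodes the quantization index $\hat{u}$ by testing joint typicality of $\hat{Y}_R^n(u)$ against the slot-1 side information $Y_{11}^n$; a wrong in-bin codeword is independent of $Y_{11}^n$, so this requires $\hat{R}-R_s<I(\hat{Y}_R;Y_{11})$. In step~(iii), treating the recovered $\hat{Y}_R^n$ as an additional slot-1 observation and the now-known $X_R^m$ as side information, it jointly decodes $(W_1,W_2)$ with the usual MAC joint-typicality test across both slots.

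Because $X_{11}\perp X_{21}$ and $X_{12}\perp(X_{22},X_R)$, the single-user and sum-rate error events in step~(iii) reduce to the conditional forms $I(X_{i1};Y_{11},\hat{Y}_R\mid X_{j1})$ and $I(X_{i2};Y_{12}\mid X_{j2},X_R)$, which coincide with the joint-information terms written in $a_1(i)$ and $c_1$; after weighting the slot-1 exponents by $\beta$ and the slot-2 exponents by $1-\beta$ and dividing message counts by $l$, these yield exactly $R_i<a_1(i)$ and $R_1+R_2<c_1$. To obtain the side-information constraint I would eliminate the auxiliary rates: combining $\hat{R}>I(Y_R;\hat{Y}_R)$ with $\hat{R}-R_s<I(\hat{Y}_R;Y_{11})$ forces $R_s>I(Y_R;\hat{Y}_R)-I(\hat{Y}_R;Y_{11})$, and pairing this with $R_s<\tfrac{m}{n}I(X_R;Y_{12})$ requires $I(Y_R;\hat{Y}_R)-I(Y_{11};\hat{Y}_R)<\tfrac{m}{n}I(X_R;Y_{12})$; multiplying through by $\beta=n/l$ gives precisely (\ref{eqn-CFDcon-MARC}).

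The main obstacle I anticipate is the joint step~(iii) analysis in the presence of the decoded $\hat{Y}_R^n$: one must argue that, conditioned on correct recovery in steps~(i)--(ii), the sequence $\hat{Y}_R^n$ is jointly typical with the true source codewords yet independent of every incorrect source codeword, so that the conditional-typicality exponents collapse to the claimed values and do not pick up spurious cross terms. Bounding the union of error events over both slots and averaging over the random binning (to control the atypically large or atypically small bins, and to justify treating $\hat{Y}_R^n$ as a clean side channel) is where the bookkeeping is most delicate; elsewhere the argument is the routine covering/packing calculus, so I would relegate those standard estimates to the Appendix.
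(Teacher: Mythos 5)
Your proposal is correct and follows essentially the same route as the paper: the paper proves this theorem by specializing Theorem~\ref{th-CFD} to the case where $D_2$ receives nothing, and its (outlined) proof of Theorem~\ref{th-CFD} uses exactly your construction --- Wyner--Ziv binning of the relay's quantization index, transmission of the bin index in slot two, and sequential decoding of $\hat{s}$, then $\hat{u}$ with the slot-one side information, then joint decoding of $(\hat{w}_1,\hat{w}_2)$ over both slots. Your elimination of the auxiliary rates to obtain (\ref{eqn-CFDcon-MARC}) and your identification of the final-step exponents with $a_1(i)$ and $c_1$ fill in bookkeeping the paper omits, but they match its stated scheme.
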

\begin{proof} Let one destination, say $D_2$, does not receive any signals from the HD-cMACr. Then the above result can be obtained following the proof for theorem \ref{th-CFD}.
\end{proof}
\begin{theorem}\label{th-CFD}
The following rate regions are achievable over discrete memoryless HD-cMACr with the modified CF scheme:
{\setlength\arraycolsep{0.1em}
\begin{eqnarray}
R_i & < & min\{a_{1}(i),a_{2}(i)\}
\label{eqn-CFD111} \\
R_1+R_2 & < & min \{c_1,c_2\}
\label{eqn-CFD222}
\end{eqnarray}
}
subject to
\begin{equation}
\begin{split}
max \{\beta [I(Y_R;\hat{Y}_R)-I(Y_{11};\hat{Y}_R)],  \beta[I(Y_R;\hat{Y}_R)-I(Y_{21};\hat{Y}_R)]\}
\\
< min   \{(1-\beta)I(X_R;Y_{12}),
(1-\beta)I(X_R;Y_{22})\}
\end{split}
\label{eqn-CFDcon}
\end{equation}
where $i\in\{1,2\}$, $a_{1}(i), a_{2}(i), c_1$, $c_2$ are previously defined as in (\ref{set-abcd}), for all the input distributions as in (\ref{inputD}).
\end{theorem}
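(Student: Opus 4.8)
The plan is to prove achievability by a compress-and-forward random-coding argument adapted to the single-block two-slot half-duplex structure, in which the parameter $\beta=n/l$ weights the slot-1 and slot-2 contributions of every mutual-information term, and then to intersect the resulting per-destination regions, which is what produces the $\min$ over $k$ in (\ref{eqn-CFD111})--(\ref{eqn-CFD222}) and in (\ref{eqn-CFDcon}).

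First I would fix an input distribution as in (\ref{inputD}) and generate the codebooks: for each message $w_i$, independent slot-1 and slot-2 source codewords $X_{i1}^n(w_i)$ and $X_{i2}^m(w_i)$; a quantization codebook of $2^{nR_Q}$ sequences $\hat{Y}_R^n(u)$ drawn i.i.d.\ from $p(\hat{y}_R)$; a partition of the indices $u$ into $2^{nR_B}$ bins $B(s)$; and, for each bin index $s$, a relay codeword $X_R^m(s)$. The sources transmit $X_{i1}^n(w_i)$ and $X_{i2}^m(w_i)$; after slot~1 the relay selects a $u$ with $(\hat{Y}_R^n(u),Y_R^n)$ jointly typical, which succeeds with high probability by the covering lemma provided $R_Q>I(Y_R;\hat{Y}_R)$, and then transmits $X_R^m(s)$ where $s$ is the bin containing $u$.

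Next I would analyze the three-step decoder at each $D_k$, since the rate constraints emerge from the three error analyses. In the \emph{bin-decoding} step $D_k$ looks for a unique $\hat{s}$ with $(X_R^m(\hat{s}),Y_{k2}^m)$ typical; treating the source signals as channel noise, the packing lemma gives reliability when $\beta R_B<(1-\beta)I(X_R;Y_{k2})$. In the \emph{quantization-index} step $D_k$ searches within $B(\hat{s})$ using its slot-1 observation $Y_{k1}^n$ as side information, which succeeds when $R_Q-R_B<I(Y_{k1};\hat{Y}_R)$. Finally, with $\hat{Y}_R^n$ recovered and $X_R^m$ known, $D_k$ \emph{jointly decodes} $(w_1,w_2)$ from $(Y_{k1}^n,\hat{Y}_R^n)$ in slot~1 and $(Y_{k2}^m,X_R^m)$ in slot~2; the three MAC-type error events (wrong $w_1$, wrong $w_2$, both wrong) yield $R_1<a_k(1)$, $R_2<a_k(2)$ and $R_1+R_2<c_k$ as defined in (\ref{set-abcd}), where the independence $X_{i2}\perp X_R$ under (\ref{inputD}) lets the conditional informations be written in the unconditioned form displayed there.

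Combining the covering bound with the two relay-side steps eliminates $R_Q$ and $R_B$: an admissible bin rate must satisfy $I(Y_R;\hat{Y}_R)-I(Y_{k1};\hat{Y}_R)<R_B<\tfrac{1-\beta}{\beta}I(X_R;Y_{k2})$ for $k=1,2$ simultaneously. Taking the max of the lower bounds and the min of the upper bounds over $k$ and clearing $\beta$ reproduces exactly the compatibility condition (\ref{eqn-CFDcon}); intersecting the message-rate regions over $k$ gives (\ref{eqn-CFD111})--(\ref{eqn-CFD222}); and a union bound over all error events, together with $l\to\infty$, drives the error probability to zero. The hard part will be precisely this coupling: the relay performs a \emph{single} quantize-and-bin operation, yet two destinations with different side information ($Y_{11}$ versus $Y_{21}$) and different relay-to-destination links ($Y_{12}$ versus $Y_{22}$) must both recover $\hat{Y}_R^n$ from it, so one must exhibit a single $R_B$ lying in the intersection of the two admissible intervals whenever (\ref{eqn-CFDcon}) holds and verify that the within-bin decoding error stays uniformly small at both destinations.
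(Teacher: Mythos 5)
Your proposal is correct and follows essentially the same route as the paper's own (very terse) outline: Wyner--Ziv quantize-and-bin at the half-duplex relay, three-step sequential decoding at each destination (bin index from the slot-2 reception, quantization index within the bin using the slot-1 observation as side information, then joint decoding of $(w_1,w_2)$ across both slots), and intersection of the two per-destination regions, with the compatibility condition (\ref{eqn-CFDcon}) arising exactly as you describe from requiring a single bin rate admissible for both destinations. If anything, your write-up supplies the rate bookkeeping ($R_Q$, $R_B$, and their elimination) that the paper omits.
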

\begin{proof} The outline of the proof can be found in the second part of the Appendix.
\end{proof}
\textit{Remark 2:} The GQF and the modified CF schemes provide different achievable rate regions. Note that results based on the modified CF should have (\ref{eqn-CFDcon-MARC}) and (\ref{eqn-CFDcon}) hold, which means the relay-destination link good enough to support the compression at relay to be recovered at destination(s). If this condition holds, then achievable rates based on the modified CF scheme are no less than those based on the GQF scheme. In other words, the GQF scheme cannot provide a better result in terms of achievable rate region. Therefore, when a higher sum rate is desired in the HD-MARC and HD-cMACr, (\ref{eqn-CFDcon-MARC}) and (\ref{eqn-CFDcon}) hold and a simplified relay is not required, the CF based scheme is suggested. On the other hand, if a low-cost simplified relay is preferred or (\ref{eqn-CFDcon-MARC}) and (\ref{eqn-CFDcon}) do not hold, then the GQF scheme is a superior choice.

\subsection{Special Case of The Achievable Rates Result}

In this subsection, we show that the achievable rate region for the three-node HDRC \cite{Yao2013} can be induced from the aforementioned achievable rate region for the five-node HD-cMACr.
Specifically, by taking $R_2=0$ and $X_{21}=X_{22}=Y_{21}=Y_{22}=\phi$ in HD-cMACr, a reduced three-node HDRC which contains $S_1$, $R$ and $D_1$ is considered.
\subsubsection{special case of GQF scheme}
For the GQF scheme, since $Y_{21}=Y_{22}=\phi$ and $R_2=0$, the individual rate (\ref{eqn-QFD111})
become 
{\setlength\arraycolsep{0.1em}
\begin{eqnarray}
R_1 & < & 
min \{\beta I(X_{11};Y_{11},\hat{Y}_R) + (1-\beta)I(X_{12};Y_{12}|X_R),
\nonumber \\
& & \qquad \beta[I(X_{11};Y_{11})-I(Y_R;\hat{Y}_R | X_{11},Y_{11})]
\nonumber \\
& & \qquad + (1-\beta)I(X_{12},X_R;Y_{12})\}
\label{eqn-conQFD1}
\end{eqnarray}
}
where (\ref{eqn-conQFD1}) is based on the Markov chain $(X_{11},Y_{11})\rightarrow Y_R\rightarrow \hat{Y}_R$ that $H(\hat{Y}_R|Y_R)=H(\hat{Y}_R|Y_R,X_{11},Y_{11})$.
Also the sum rate (\ref{eqn-QFD222}) can also be rewritten as
{\setlength\arraycolsep{0.1em}
\begin{eqnarray}
R_1 & < & 
min \{\beta I(X_{11};Y_{11},\hat{Y}_R) + (1-\beta)I(X_{12};Y_{12}|X_R),
\nonumber \\
& & \qquad \beta[I(X_{11},\hat{Y}_R;Y_{11}) + I(X_{11};\hat{Y}_R)-I(Y_R;\hat{Y}_R)]
\nonumber \\
& & \qquad + (1-\beta)I(X_{12},X_R;Y_{12})\}
\label{eqn-conQFD2}.
\end{eqnarray}
}
Observing that (\ref{eqn-conQFD2}) is the same as (\ref{eqn-conQFD1}). By changing the variable names accordingly, $R_1$ from the individual rate and the sum rate become the same as in theorem 1 of \cite{Yao2013}. Therefore the achievable rates based on QF scheme of \cite{Yao2013} can be treated as a special case of \textit{Theorem \ref{th-QFD}} of this work.

\subsubsection{special case of modified CF scheme}
Since in the three-node HDRC $Y_{21}=Y_{22}=\phi$,
the individual rate $R_1$ from (\ref{eqn-CFD111}) can be rewritten as:
{\setlength\arraycolsep{0.1em}
\begin{eqnarray}
R_1 & < & 
\beta I(X_{11};Y_{11},\hat{Y}_R) + (1-\beta)I(X_{12};Y_{12}|X_R)
\label{eqn-conCFD3}
\end{eqnarray}
}
where 
(\ref{eqn-conCFD3}) is from mutual information identity and $X_{21}=X_{22}=\phi$. Similarly using $R_2=0$, the sum rate inequality (\ref{eqn-CFD222}) can be rewritten as the same as (\ref{eqn-conCFD3}). Also note that the condition for the achievable rate region (\ref{eqn-CFDcon}) become
\begin{equation}
(1-\beta)I(X_R;Y_{12})>\beta I(Y_R;\hat{Y}_R)-\beta I(Y_{11};\hat{Y}_R).
\end{equation}
By changing the variable names respectively, the achievable rate region based on CF scheme of \cite{Yao2013} can also be considered as a special case of the result of \textit{Theorem \ref{th-CFD}} based on the modified CF scheme of this work.

\section{Gaussian Channels and Numerical Examples}
In this section, we extend the proposed GQF scheme and the modified CF scheme to the Gaussian Channels. Some numerical results are shown to compare the performance of the two schemes in the Half-Duplex Gaussian MARC. 

Notice that in a cMACr both destinations need to decode both messages from the sources. However, in an interference relay channel \cite{Tian2011}, each of the destination may not be interested in decoding the message from the interfered source. The sum achievable rates in a cMACr are always a minimum function of two terms that obtained from the achievability of each destination. Therefore, for clarity of the presentation and simplicity of exposition, the extended results to the half duplex Gaussian cMACr will not be shown as they have the similar performance and effect in comparing GQF and CF. 

Consider a Gaussian HD-MARC as shown in Fig.\ref{fig:MARC-phases}. Following the similar notation of \cite{Yao2013}, the channel transition probabilities specified in the below relationships:
{\setlength\arraycolsep{0.1em}
\begin{eqnarray}
y_{11}^n & = & h_{11}x_{11}^n+h_{21}x_{21}^n+z_{11}^n
\nonumber \\
y_{R}^n & = & h_{1R}x_{11}^n+h_{2R}x_{21}^n+z_{R}^n
\nonumber \\
y_{12}^m & = & h_{11}x_{12}^m+h_{21}x_{22}^m+h_{R1}x_R^m+z_{12}^m
\nonumber
\end{eqnarray}
}
where $h_i$ for $i\in\{11,21,1R,2R,R1\}$ are real constants representing the channel gain, and the channel noises $z_{11}^n,z_{R}^n$ and $z_{12}^m$ are generated independently and identically according to Gaussian distributions with zero means and unit variances. They are independent of other random variables in the model. 

The transmitters at the sources and the relay have power constraints over the transmitted sequences in each slot as the following:
\begin{eqnarray}
\frac{1}{n}\sum_{i=1}^{n}|x_{j,i}| & \leq & P_j, \text{for}\:j\in\{11,21\};
\\
\frac{1}{m}\sum_{i=1}^{m}|x_{k,i}| & \leq & P_k,\text{for}\:k\in\{12,22,R\},
\end{eqnarray}
where $|x|$ shows the absolute value of $x$. 

In this work, it is assumed that in the Gaussian channels all the codebooks used are generated according to some zero-mean Gaussian distributions. Notice that these input distributions are not necessarily the optimal distributions which maximize the achievable rate. Nevertheless, the Gaussian codebooks are still used since they are the most widely assumption in the literature and make the analysis of characterizing the achievable rates tractable for illustration purpose.

Let $X_i$ for $i\in\{11,21,12,22,R\}$, $Z_j$ for $j\in\{11,12,R\}$ and $Z_Q$ be generic random variables which are Gaussian with zero mean and are mutually independent. The variances of $X_i$, $Z_j$ and $Z_Q$ are $P_i$, 1 and $\sigma_Q^2$ respectively. The random variable $Y_k$ denotes the channel output where $k\in\{11,12,R\}$.
$\hat{Y}_R$ is the estimation of $Y_R$. The following equations show the relationships between the introduced random variables:
{\setlength\arraycolsep{0.1em}
\begin{eqnarray}
Y_{11} & = & h_{11}X_{11}+h_{21}X_{21}+Z_{11};
\\
Y_{12} & = & h_{11}X_{12}+h_{21}X_{22}+h_{R1}X_R+Z_{12};
\\
Y_{R} & = & h_{1R}X_{11}+h_{2R}X_{21}+Z_{R};
\\
\hat{Y}_R &=& Y_R+Z_Q. \label{eqn-RtoRhat}
\end{eqnarray}
}
In the following, the achievable rate region for the Gaussian setup with the GQF scheme is characterized.
\begin{proposition}
The following rates are achievable for the Gaussian HD-MARC by using the GQF scheme:
{\setlength\arraycolsep{0.1em}
\begin{eqnarray}
R_i &<&  \underset{\sigma_Q^2,\beta}{max} \;\; min \{\frac{\beta}{2}log(1+h_{i1}^2P_{i1}+\frac{h_{iR}^2P_{i1}}{1+\sigma_Q^2})
\nonumber \\
&&+\frac{1-\beta}{2}log(1+h_{i1}^2P_{i2}),
\nonumber \\
&&\frac{\beta}{2}log(\frac{(1+h_{i1}^2P_{i1})\sigma_Q^2}{1+\sigma_Q^2})
\nonumber \\
&&+\frac{1-\beta}{2}log(1+h_{i1}^2P_{i2}+h_{R1}^2P_R)\}
\label{eqn-GQF-GauMARCindi}\\
R_1 &+& R_2 <   \underset{\sigma_Q^2,\beta}{max} \;\; min \{\frac{\beta}{2}log(1+h_{11}^2P_{11}+h_{21}^2P_{21}
\nonumber \\
&&+\frac{(h_{11}h_{2R}-h_{1R}h_{21})^2P_{11}P_{21}+h_{1R}^2P_{11}+h_{2R}^2P_{21}}{1+\sigma_Q^2})
\nonumber \\ 
&&+\frac{1-\beta}{2}log(1+h_{11}^2P_{12}+h_{21}^2P_{22}),
\nonumber \\
&&\frac{\beta}{2}log(\frac{(1+h_{11}^2P_{11}+h_{21}^2P_{21})\sigma_Q^2}{1+\sigma_Q^2})
\nonumber \\
&&+\frac{1-\beta}{2}log(1+h_{11}^2P_{12}+h_{21}^2P_{22}+h_{R1}^2P_R)\}
\label{eqn-GQF-GauMARCsum}
\end{eqnarray}
}
where i = 1,2 and $\sigma_Q^2$ is the relay quantization factor.
\end{proposition}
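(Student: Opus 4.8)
The statement is the Gaussian instantiation of Theorem \ref{th-QFD-MARC}, so the plan is to exploit the fact that the theorem holds for \emph{every} admissible input distribution (\ref{inputD-MARC}) and every fixed $\beta$. I would choose the jointly Gaussian inputs and test channel specified by the stated relations for $Y_{11},Y_{12},Y_R$ together with $\hat{Y}_R=Y_R+Z_Q$, where $Z_Q$ is zero-mean Gaussian of variance $\sigma_Q^2$ and independent of everything else, evaluate each of the quantities $a_1(i),b_1(i),c_1,d_1$ from (\ref{set-abcd-MARC}) in closed form, and then take the supremum over the free parameters $\sigma_Q^2$ and $\beta$. With $k=1$ throughout (a single destination), the individual bound $R_i<\min\{a_1(i),b_1(i)\}$ and the sum bound $R_1+R_2<\min\{c_1,d_1\}$ should reduce term-by-term to (\ref{eqn-GQF-GauMARCindi}) and (\ref{eqn-GQF-GauMARCsum}).

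For the individual rate I would proceed as follows. In $a_1(i)$, independence of $X_{i1}$ and $X_{j1}$ gives $I(X_{i1};X_{j1},Y_{11},\hat{Y}_R)=I(X_{i1};Y_{11},\hat{Y}_R\mid X_{j1})$, a single-input two-output Gaussian channel whose noise branches have variances $1$ and $1+\sigma_Q^2$ and are independent; maximal-ratio combining adds the branch SNRs and yields $\tfrac{1}{2}\log\big(1+h_{i1}^2P_{i1}+\tfrac{h_{iR}^2P_{i1}}{1+\sigma_Q^2}\big)$, while the second-slot term collapses to the scalar rate $\tfrac{1}{2}\log(1+h_{i1}^2P_{i2})$. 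For $b_1(i)$ the delicate piece is $I(\hat{Y}_R;Y_R\mid X_{i1},X_{j1},Y_{11})$: conditioned on the inputs, the residuals of $Y_{11}$ and $Y_R$ are $Z_{11}$ and $Z_R$, which are independent, so this term equals $I(Z_R+Z_Q;Z_R)=\tfrac{1}{2}\log\tfrac{1+\sigma_Q^2}{\sigma_Q^2}$; subtracting it from $I(X_{i1};Y_{11}\mid X_{j1})=\tfrac{1}{2}\log(1+h_{i1}^2P_{i1})$ produces $\tfrac{1}{2}\log\tfrac{(1+h_{i1}^2P_{i1})\sigma_Q^2}{1+\sigma_Q^2}$, and the second-slot term $I(X_{i2},X_R;X_{j2},Y_{12})$ becomes $\tfrac{1}{2}\log(1+h_{i1}^2P_{i2}+h_{R1}^2P_R)$.

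For the sum rate, the first-slot term of $c_1$, namely $I(X_{11},X_{21};Y_{11},\hat{Y}_R)$, is a two-input two-output Gaussian multiple-access channel; I would write it as $\tfrac{1}{2}\log\tfrac{\det\Sigma_{\mathrm{out}}}{\det\Sigma_{\mathrm{noise}}}$ with $\Sigma_{\mathrm{noise}}=\mathrm{diag}(1,1+\sigma_Q^2)$ and $\Sigma_{\mathrm{out}}=H\,\mathrm{diag}(P_{11},P_{21})\,H^{T}+\Sigma_{\mathrm{noise}}$, where $H$ has rows $(h_{11},h_{21})$ and $(h_{1R},h_{2R})$. Expanding the determinant, the square of the off-diagonal cross-correlation $h_{11}h_{1R}P_{11}+h_{21}h_{2R}P_{21}$ recombines with the diagonal product so that, after dividing by $1+\sigma_Q^2$, exactly the contribution $\big((h_{11}h_{2R}-h_{1R}h_{21})^2P_{11}P_{21}+h_{1R}^2P_{11}+h_{2R}^2P_{21}\big)/(1+\sigma_Q^2)$ remains, matching (\ref{eqn-GQF-GauMARCsum}). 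For $d_1$ the key structural fact is that, given $(X_{11},X_{21})$, the observation $\hat{Y}_R$ is independent of $Y_{11}$, so $I(X_{11},X_{21},\hat{Y}_R;Y_{11})=I(X_{11},X_{21};Y_{11})=\tfrac{1}{2}\log(1+h_{11}^2P_{11}+h_{21}^2P_{21})$; the remaining two terms $I(X_{11},X_{21};\hat{Y}_R)-I(Y_R;\hat{Y}_R)$ telescope because the common factor $1+\sigma_Q^2+h_{1R}^2P_{11}+h_{2R}^2P_{21}$ cancels, leaving $\tfrac{1}{2}\log\tfrac{\sigma_Q^2}{1+\sigma_Q^2}$ and reproducing the stated bracket. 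The two second-slot sum terms are ordinary Gaussian-MAC rates.

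The main obstacle is the determinant bookkeeping for the first-slot sum-rate term: one must verify that the cross term $-(h_{11}h_{1R}P_{11}+h_{21}h_{2R}P_{21})^2$ together with the expansion of $(1+h_{11}^2P_{11}+h_{21}^2P_{21})(h_{1R}^2P_{11}+h_{2R}^2P_{21})$ collapses into the clean factor $(h_{11}h_{2R}-h_{1R}h_{21})^2P_{11}P_{21}$. Everything else rests on the Markov chain $(X_{11},Y_{11})\to Y_R\to\hat{Y}_R$ and the mutual independence of $Z_{11},Z_R,Z_Q$, which make the cancellations in $b_1(i)$ and $d_1$ exact; once those are in place, maximizing the resulting expressions over $\sigma_Q^2$ and $\beta$ completes the argument.
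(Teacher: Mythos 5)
Your proposal is correct and follows exactly the route the paper intends: Proposition 1 is just Theorem~\ref{th-QFD-MARC} evaluated at the stated jointly Gaussian inputs with the test channel $\hat{Y}_R=Y_R+Z_Q$, and your closed-form evaluations of $a_1(i)$, $b_1(i)$, $c_1$, $d_1$ (including the determinant identity yielding $(h_{11}h_{2R}-h_{1R}h_{21})^2P_{11}P_{21}$ and the cancellations in $b_1(i)$ and $d_1$ from the independence of $Z_{11},Z_R,Z_Q$) all check out against (\ref{eqn-GQF-GauMARCindi}) and (\ref{eqn-GQF-GauMARCsum}).
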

\emph{Remark 3}:\; Within the achievable sum rate (\ref{eqn-GQF-GauMARCsum}), the two min terms can be treated as two functions of $\sigma_Q^2$. Similarly as \cite{Yao2013}, denote the first term as $I_1(\sigma_Q^2)$ and the second term as $I_2(\sigma_Q^2)$. The effect of the different values of the $\sigma_Q^2$ on the achievable sum rate is shown in the Fig.\ref{fig:Sigma-q-GQF}. It can be seen that, for fixed $\beta$, $I_1(\sigma_Q^2)$ is a simple decreasing function and  $I_2(\sigma_Q^2)$ is an increasing function. In other words, the first order derivative of  $I_1(\sigma_Q^2)$ is always negative and that of  $I_2(\sigma_Q^2)$ is always positive. Let $I_1(\sigma_Q^2)=I_2(\sigma_Q^2)$, the value of $\sigma_Q^2$ that maximizes the sum rate can be obtained.

\begin{figure}[t]
\centering
\includegraphics[scale=0.225]{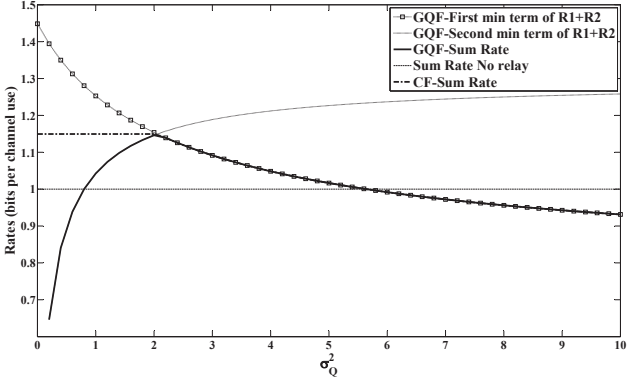}
\caption{Achievable Rates of GQF and CF based scheme with variant $\sigma_q^2,P_{11}=P_{12}=1,P_{21}=P_{22}=1,P_R=1$, and if no relay source powers $P_1=P_2=1.5$, channel gain are $h_{11}=h_{21}=1, h_{1R}=3, h_{2R}=0.5, h_{R1}=3,\beta=0.5$}
\label{fig:Sigma-q-GQF}
\end{figure}

Next, the achievable rate region for the Gaussian setup with the modified CF scheme will be described for the HD-MARC. As stated before, a Gaussian quantization codebook is assumed for illustration purpose and the optimality is not claimed here.
\begin{proposition}
The following rates are achievable for the Gaussian HD-MARC by using the modified CF scheme:
{\setlength\arraycolsep{0.1em}
\begin{eqnarray}
R_i&<&\frac{\beta}{2}log(1+h_{i1}^2P_{i1}+\frac{h_{iR}^2P_{i1}}{1+\sigma_Q^2})
\nonumber \\
&&+\frac{1-\beta}{2}log(1+h_{i1}^2P_{i2}),
\label{eqn-CF-GauMARCindi} 
\\
R_1 &+& R_2 < \frac{\beta}{2}log(1+h_{11}^2P_{11}+h_{21}^2P_{21}
\nonumber \\
&+&\frac{(h_{11}h_{2R}-h_{1R}h_{21})^2P_{11}P_{21}+h_{1R}^2P_{11}+h_{2R}^2P_{21}}{1+\sigma_Q^2})
\nonumber \\
&+&\frac{1-\beta}{2}log(1+h_{11}^2P_{12}+h_{21}^2P_{22})
\label{eqn-CF-GauMARCsum}
\end{eqnarray}
}
where i=1,2 and 
\begin{eqnarray}
\sigma_Q^2 >
\frac{1+\frac{h_{1R}^2P_{11}+h_{2R}^2P_{21}+(h_{11}h_{2R}-h_{1R}h_{21})^2P_{11}P_{21}}{1+h_{11}^2P_{11}+h_{21}^2P_{21}}}{(1+\frac{h_{R1}^2P_R}{1+h_{11}^2P_{12}+h_{21}^2P_{22}})^{\frac{1-\beta}{\beta}}-1}
\label{eqn-CF-GauMARC}
\end{eqnarray}
\end{proposition}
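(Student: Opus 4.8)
The plan is to specialize the discrete-memoryless rate region of Theorem~\ref{th-CFD-MARC} to the Gaussian model. That theorem asserts that, under the modified CF scheme, $(R_1,R_2)$ is achievable whenever $R_i < a_1(i)$ and $R_1+R_2 < c_1$, with $a_1(i)$ and $c_1$ the mutual-information functionals of (\ref{set-abcd}), provided the compression constraint (\ref{eqn-CFDcon-MARC}) holds. So the whole task reduces to evaluating three groups of mutual-information terms under the input choice (\ref{inputD}), in which every codebook variable is zero-mean Gaussian with the prescribed variance and the quantizer is the additive test channel $\hat Y_R = Y_R + Z_Q$ of (\ref{eqn-RtoRhat}). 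I would treat the individual rates, the sum rate, and the compression constraint in turn.

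For the individual rate $a_1(i)$ I would first use mutual independence of the inputs to drop the interfering source: since $X_{i1}$ and $X_{j1}$ are independent, $I(X_{i1};X_{j1},Y_{11},\hat Y_R)=I(X_{i1};Y_{11},\hat Y_R\mid X_{j1})$, and conditioning on $X_{j1}$ leaves $X_{i1}$ observed through two \emph{independent} noisy branches, $h_{i1}X_{i1}+Z_{11}$ with noise variance $1$ and $h_{iR}X_{i1}+Z_R+Z_Q$ with noise variance $1+\sigma_Q^2$. The combining formula for parallel Gaussian observations then yields $\tfrac12\log\!\big(1+h_{i1}^2P_{i1}+h_{iR}^2P_{i1}/(1+\sigma_Q^2)\big)$, the first-slot term of (\ref{eqn-CF-GauMARCindi}); the same independence argument collapses the second-slot term $I(X_{i2};X_{j2},X_R,Y_{12})$ to $I(X_{i2};Y_{12}\mid X_{j2},X_R)=\tfrac12\log(1+h_{i1}^2P_{i2})$. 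For the sum rate $c_1$ the first slot becomes the $2\times2$ Gaussian MIMO mutual information $I(X_{11},X_{21};Y_{11},\hat Y_R)$, which I would evaluate with the $\tfrac12\log\det$ formula using channel matrix $\bigl(\begin{smallmatrix}h_{11}&h_{21}\\ h_{1R}&h_{2R}\end{smallmatrix}\bigr)$, input covariance $\mathrm{diag}(P_{11},P_{21})$, and noise covariance $\mathrm{diag}(1,1+\sigma_Q^2)$; expanding the two determinants produces exactly the cross term $(h_{11}h_{2R}-h_{1R}h_{21})^2P_{11}P_{21}$ of (\ref{eqn-CF-GauMARCsum}), while the second-slot term $I(X_{12},X_{22};Y_{12}\mid X_R)$ reduces to $\tfrac12\log(1+h_{11}^2P_{12}+h_{21}^2P_{22})$.

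The compression constraint (\ref{eqn-CFDcon-MARC}) is the part I expect to need the most care. Rather than computing $I(Y_R;\hat Y_R)$ and $I(Y_{11};\hat Y_R)$ separately, I would exploit the Markov chain $(X_{11},X_{21},Y_{11})\to Y_R\to\hat Y_R$ to write $I(Y_R;\hat Y_R)-I(Y_{11};\hat Y_R)=h(\hat Y_R\mid Y_{11})-h(\hat Y_R\mid Y_R)=h(\hat Y_R\mid Y_{11})-\tfrac12\log(2\pi e\,\sigma_Q^2)$. The conditional variance $\mathrm{Var}(\hat Y_R\mid Y_{11})$ follows from the scalar MMSE formula and, after substituting covariances, reproduces the same quantity $\big(h_{1R}^2P_{11}+h_{2R}^2P_{21}+(h_{11}h_{2R}-h_{1R}h_{21})^2P_{11}P_{21}\big)/(1+h_{11}^2P_{11}+h_{21}^2P_{21})$ that already appeared in the sum rate. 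Evaluating $I(X_R;Y_{12})=\tfrac12\log\!\big(1+h_{R1}^2P_R/(1+h_{11}^2P_{12}+h_{21}^2P_{22})\big)$ and feeding all three into (\ref{eqn-CFDcon-MARC}), the logarithms cancel after exponentiating by $2/\beta$, and isolating $\sigma_Q^2$ converts the inequality into the threshold (\ref{eqn-CF-GauMARC}).

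The main obstacle is thus almost purely algebraic rather than conceptual: correctly expanding the $2\times2$ MIMO determinant so that the off-diagonal contribution collapses to the single squared cross term $(h_{11}h_{2R}-h_{1R}h_{21})^2P_{11}P_{21}$, and then inverting the exponentiated compression inequality while tracking the sign of $G^{(1-\beta)/\beta}-1$, so that the result is correctly stated as a \textbf{lower} bound on $\sigma_Q^2$. Everything else is a routine substitution of Gaussian entropies into the already-established discrete-memoryless region of Theorem~\ref{th-CFD-MARC}.
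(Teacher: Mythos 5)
Your proposal is correct and takes essentially the same route the paper intends: Proposition~2 is given without a separate proof precisely because it is the direct Gaussian evaluation of Theorem~\ref{th-CFD-MARC} --- i.e., of $a_1(i)$, $c_1$ from (\ref{set-abcd}) and the constraint (\ref{eqn-CFDcon-MARC}) --- under the input distribution (\ref{inputD}) with the test channel $\hat{Y}_R=Y_R+Z_Q$ of (\ref{eqn-RtoRhat}), and your parallel-observation, $\log\det$, and MMSE computations all check out and reproduce (\ref{eqn-CF-GauMARCindi})--(\ref{eqn-CF-GauMARC}). The only nitpick is a wording slip: $\mathrm{Var}(\hat{Y}_R\mid Y_{11})$ equals $1+\sigma_Q^2$ \emph{plus} the quantity $\big(h_{1R}^2P_{11}+h_{2R}^2P_{21}+(h_{11}h_{2R}-h_{1R}h_{21})^2P_{11}P_{21}\big)/(1+h_{11}^2P_{11}+h_{21}^2P_{21})$, not that quantity itself, which is exactly what makes the numerator of (\ref{eqn-CF-GauMARC}) come out as $1$ plus that ratio after the $\sigma_Q^2$ from $h(\hat{Y}_R\mid Y_R)$ cancels.
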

\emph{Remark 4}:\; The constraint condition (\ref{eqn-CFDcon-MARC}) in the discrete memoryless channel guarantees the quantized observation at relay can be recovered at destination. Here in the Gaussian setup, it is translated to the condition of (\ref{eqn-CF-GauMARC}). It can be seen that a minimum value of $\sigma_Q^2$ is required for the modified CF scheme. This is essentially due to the characteristic of $Z_Q$ in (\ref{eqn-RtoRhat}) where a larger value of $\sigma_Q^2$ will result $\hat{Y}_R$ to be a more degraded version of $Y_R$ or in other words a more compressed signal at relay.

The achievable sum rate term (\ref{eqn-CF-GauMARCsum}) based on the CF scheme is the same as the first min term of (\ref{eqn-GQF-GauMARCsum}) based on GQF scheme when the constraint on $\sigma_Q^2$ in (\ref{eqn-CF-GauMARC}) is satisfied. Fig.\ref{fig:Sigma-q-GQF} also shows the sum rates based on modified CF scheme in the HD-MARC based on different $\sigma_Q^2$. Notice that if relay uses a good quantizer or relay has a good estimate $\hat{Y}_R$ of $Y_R$ such that $\sigma_Q^2$ is less than the right-hand side of (\ref{eqn-CF-GauMARC}). Then, a higher sum rate cannot be achieved. This is due to the channel between relay and destination is limiting the compressed observation at relay to be recovered at destination. In other words, $\hat{Y}_R$ has a higher rate than the channel between relay and destination can support. Thus for smaller value of $\sigma_Q^2$, the sum rate is the same as the one taken from the constraint condition.

\begin{figure}[t]
\centering
\includegraphics[scale=0.23]{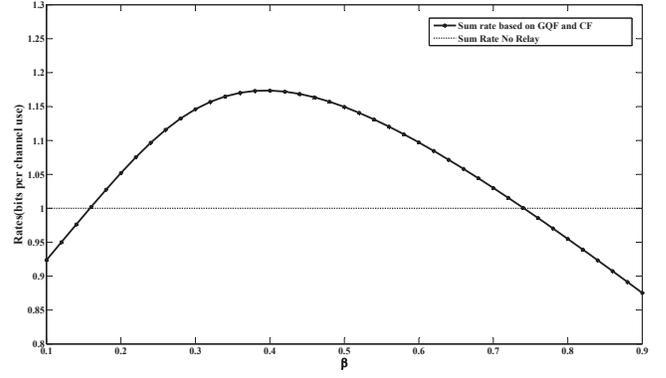}
\caption{Achievable Rates of GQF and CF based scheme with variant $\beta$}
\label{fig:beta-factor}
\end{figure}

As defined previously, $\beta$ is the ratio of the two slots taken in each block. The impact of the factor $\beta$ on the achievable rates that based on the GQF scheme in the HD-MARC channel is shown in Fig. \ref{fig:beta-factor} where we assume the same power and channel gain as Fig. \ref{fig:Sigma-q-GQF}. It can be seen that under such a channel state in order to maximize the achievable sum rate the length of each slot should be carefully chosen. Notice that if the relay quantization random variable $\sigma_Q^2$ was chosen to satisfy the constraint (\ref{eqn-CF-GauMARC}), and then the achievable sum rates based on CF is the same as those based on GQF. As also shown in the Fig. \ref{fig:beta-factor}, both GQF and CF schemes outperform the case where no relay is available in the channel.

Comparing the achievable sum rates of (\ref{eqn-GQF-GauMARCsum}) with optimized $\sigma_Q^2$ and (\ref{eqn-CF-GauMARCsum}) for the HD-MARC based on both GQF and CF under the Gaussian setup, the sum rates are same. In general, without optimizing the relay quantization factor $\sigma_Q^2$, the CF scheme outperforms the GQF scheme. However, by choosing the optimized value of $\sigma_Q^2$, the GQF scheme, in which a low-cost simplified relay is used, is able to provide similar sum rate performance as the more complicated but sophisticated CF scheme.

\section{Conclusion}

In this paper the Half-Duplex relaying in the Multiple Access Relay Channel and the compound Multiple Access Channel with a relay has been studied. A variation of the QF scheme, the GQF scheme, based on single block coding has been proposed. The GQF scheme employs joint decoding at destinations and uses a low-cost relay where it only quantizes the received signal after first slot and forwards it in the second slot without Wyner-Ziv binning. For comparison purpose, a modified CF scheme was also introduced. The achievable rate regions were obtained based on GQF scheme and CF scheme for HD-MARC and HD-cMACr, respectively. It is also shown that the achievable rate regions for the three-node HDRC can be treated as specical cases of our results obtained for the five-node channel. As a further development, the achievable rate results from discrete memoryless channels were also extended to the Half-Duplex Gaussian MARC. Some numerical examples were provided for the purpose of the performance comparison. The results indicate that the proposed GQF scheme can provide a similar performance as the CF scheme with only a simplified low-cost relay.

\section*{Appendix}
\subsection{Proof of Theorem \ref{th-QFD}}
Assume the source messages $W_{1}$ and $W_{2}$ are independent of each other. Each message $W_{i}$, $i=1,2$, is uniformly distributed in its message set $\mathcal{W}_i = [1 : 2^{lR_i}]$.

\subsubsection{Codebook Generation}

Assume the joint pmf factors as
\begin{equation}
\begin{split}
&p(x_{11})p(x_{21})p(x_{12})p(x_{22})p(x_R)p(\hat{y}_R|y_R)\\
&p(y_{11},y_{21},y_R|x_{11},x_{12})p(y_{21},y_{22}|x_{12},x_{22},x_R).
\end{split}
\end{equation}
Fix any input distribution
$$p(x_{11})p(x_{21})p(x_{12})p(x_{22})p(x_R)p(\hat{y}_R|y_R).$$
Randomly and independently generate
\begin{itemize}

    \item $2^{lR_1}$ codewords $x_{11}^{n}(w_1)$, $w_1\in\mathcal{W}_1$, each according to $\prod _{i=1}^{n} p_{X_{11}} (x_{11,i}(w_1))$;
    \item $2^{lR_2}$ codewords $x_{21}^{n}(w_2)$, $w_2\in\mathcal{W}_2$, each according to $\prod _{i=1}^{n} p_{X_{21}} (x_{21,i}(w_2))$;
    \item $2^{lR_1}$ codewords $x_{12}^{m}(w_1)$, $w_1\in\mathcal{W}_1$, each according to $\prod _{i=1}^{m} p_{X_{12}} (x_{12,i}(w_1))$;
    \item $2^{lR_2}$ codewords $x_{22}^{m}(w_2)$, $w_2\in\mathcal{W}_2$, each according to $\prod _{i=1}^{m} p_{X_{22}} (x_{22,i}(w_2))$;
    \item $2^{lR_U}$ codewords $x_{R}^{m}(u)$, $u\in\mathcal{U}=\{1,2,\dots 2^{lR_U}\}$, each according to $\prod _{i=1}^{m} p_{X_{R}} (x_{R,i}(u))$.
\end{itemize}
Calculate the marginal distribution
$$p(\hat{y}_R)=\sum_{x_{11}\in \mathcal{X_{11}} ,x_{21}\in \mathcal{X_{21}},y_{11}\in \mathcal{Y_{21}},y_{21}\in \mathcal{Y_{21}},y_{R}\in \mathcal{Y_R}} p(\hat{y}_R|y_R)$$
$$\qquad p(y_R,y_{11},y_{21}|x_{11},x_{21})p(x_{11})p(x_{21}).$$
Randomly and independently generate $2^{lR_U}$ codewords $\hat{y}_{R}^{n}(u)$, each according to $\prod _{i=1}^{n} p_{\hat{Y}_{R}} (\hat{y}_{R,i}(u))$.

\subsubsection{Encoding}
To send messages $w_i$, the source node $S_i$ transmits $x_{i1}^{n}(w_i)$ in the first slot and $x_{i2}^{m}(w_i)$ in the second slot, where $i=1,2$. Let $\epsilon' \in (0,1)$ . After receiving $y_R^n$ at the end of the first slot, the relay tries to find a unique $u\in\mathcal{U}$ such that 
\begin{equation}
(y_R^n,\hat{y}_R^n(u))\in \mathcal{T}_{\epsilon'}^n(Y_R,\hat{Y}_R)
\end{equation}
where $\mathcal{T}_{\epsilon}^n(Y_R,\hat{Y}_R)$ is the $\epsilon$-strongly typical set as defined in \cite{Lim2011}. If there are more than one such $u$, randomly choose one in $\mathcal{U}$. The relay then sends $x_R^m(u)$ in the second slot.

\subsubsection{Decoding}

Destinations start decoding the messages after the second slot finishes. Let $\epsilon'<\epsilon<1$. Upon receiving in both slots, $D_1$ and $D_2$ tries to find a unique pair of the messages $\hat{w}_1\in\mathcal{W}_1$ and $\hat{w}_2\in\mathcal{W}_2$ such that 
\begin{eqnarray}
      (x_{11}^n(\hat{w}_1),x_{21}^n(\hat{w}_2),y_{11}^n,\hat{y}_R^n(u)) \in \mathcal{T}_\epsilon^n(X_{11},X_{21},Y_{11},\hat{Y}_R)\\
      (x_{12}^m(\hat{w}_1),x_{22}^m(\hat{w}_2),x_R^m(u),y_{12}^m) \in \mathcal{T}_\epsilon^m(X_{12},X_{22},X_R,Y_{12})
\end{eqnarray}
and
\begin{eqnarray}
      (x_{11}^n(\hat{w}_1),x_{21}^n(\hat{w}_2),y_{21}^n,\hat{y}_R^n(u)) \in \mathcal{T}_\epsilon^n(X_{11},X_{21},Y_{21},\hat{Y}_R)\\
      (x_{12}^m(\hat{w}_1),x_{22}^m(\hat{w}_2),x_R^m(u),y_{22}^m) \in \mathcal{T}_\epsilon^m(X_{12},X_{22},X_R,Y_{22})
\end{eqnarray}

for some $u\in\mathcal{U}$.

\subsubsection{Probability of Error Analysis}

Let $W_i$ denote the message sent from source node $S_i, i=1,2$. $U$ represents the index chosen by the relay $R$. The probability of error averaged over $W_1$,$W_2$, $U$ over all possible codebooks is defined as
{\setlength\arraycolsep{0.1em}
\begin{align}
Pr(\mathcal\epsilon) &  
= Pr(\hat{W}_1\neq 1 \cup \hat{W}_2\neq 1 | W_1=1, W_2=1)\label{poe}.   
\end{align}
}
The (\ref{poe}) is based on the symmetry of the codebook construction and the fact that the messages $W_1$ and $W_2$ are chosen uniformly from $\mathcal{W}_1$ and $\mathcal{W}_2$, the overall probability of error is equal to the probability of error when $W_1=1$ and $W_2=1$ were selected as the message indices.
Define three events $\mathcal{E}_{0}$ $\mathcal{E}_{2,(w_1,w_2)}$ and $\mathcal{E}_{2,(w_1,w_2)}$ which are described in the following with $i=1,2$:
{\setlength\arraycolsep{0.1em}
\begin{align}
\mathcal{E}_{0}  &:=  \{((Y_R^n,\hat{Y}_R^n(u))\notin  \mathcal{T}_{\epsilon'}^n(Y_R\hat{Y}_R)), \text {for all} \: u \} 
\\     
\mathcal{E}_{i,(w_1,w_2)}    &:=
\{ (X_{11}^n(w_1),X_{21}^n(w_2),Y_{i1}^n,\hat{Y}_R^n(u)) 
\nonumber \\
& \qquad \in \mathcal{T}_{\epsilon}^n(X_{11}X_{21}Y_{i1}\hat{Y}_R) \:\: \text{and}
\nonumber \\
&\qquad (X_{12}^m(w_1),X_{22}^m(w_2),X_R^m(u),Y_{i2}^m) 
\nonumber \\
&\qquad \in \mathcal{T}_{\epsilon}^m(X_{11}X_{21}X_RY_{i2}) \; \text{for some}\: u \}.
\end{align}
}
Then the probability of error can be rewritten as
{\setlength\arraycolsep{0.1em}
\begin{eqnarray}
Pr(\mathcal\epsilon) 
& \leq & Pr (\mathcal{E}_{0}|W_1=1,W_2=1) 
\nonumber \\
& & + Pr( (\mathcal{E}_{1,(1,1)} \cap \mathcal{E}_{2,(1,1)})^c\cap\mathcal{E}_{0}^c|W_1=1,W_2=1)
\nonumber \\
&  & + Pr(\cup_{(w_1,w_2)\in\mathcal{A}} \mathcal{E}_{1,(w_1,w_2)}|W_1=1,W_2=1)
\nonumber \\
&  & + Pr(\cup_{(w_1,w_2)\in\mathcal{A}} \mathcal{E}_{2,(w_1,w_2)}|W_1=1,W_2=1),
\label{ineqn-err}
\end{eqnarray}
}
where $\mathcal{A}:=\{(w_1,w_2)\in\mathcal{W}_1\times \mathcal{W}_2:(w_1,w_2)\neq (1,1)\}$. Assume $\beta$ is fixed, then by covering lemma \cite{Gamal2010}, $Pr(\mathcal{E}_{0}|W_1=1,W_2=1)\rightarrow 0$ when $l\rightarrow \infty$, if 
\begin{equation}
    R_U > \beta I(Y_R,\hat{Y}_R) + \delta(\epsilon') \label{eq-rfindu}
\end{equation}
where $\delta(\epsilon')\rightarrow 0$ as $\epsilon'\rightarrow 0$. By the conditional typicality lemma \cite{Gamal2010}, $Pr( (\mathcal{E}_{1,(1,1)} \cap \mathcal{E}_{2,(1,1)})^c\cap\mathcal{E}_{0}^c|W_1=1,W_2=1) \rightarrow 0$ as $l\rightarrow \infty$. 
Due to the space limitation and the similar fashion can be used to analyze the probability of error for both the third line term and the fourth line term of (\ref{ineqn-err}), only the analysis for the third line term $Pr(\cup_{(w_1,w_2)\in\mathcal{A}} \mathcal{E}_{1,(w_1,w_2)}|W_1=1,W_2=1)$  will be shown in this proof.
In addition, some standard probability error analysis is also omitted here and only those important steps were kept in the following.

For fixed $\beta = \frac{n}{l}$,  $1-\beta = \frac{m}{l}$, if $l\rightarrow\infty$, $\epsilon\rightarrow 0$ and the following inequalities hold:
{\setlength\arraycolsep{0.1em}
\begin{eqnarray}
R_1 & < & \beta I(X_{11};X_{21},Y_{11},\hat{Y}_{R}) 
\nonumber \\
& & + (1-\beta)I(X_{12};X_{22},X_{R},Y_{12})
\\
R_1 + R_U & < & \beta[I(X_{11},\hat{Y}_{R};X_{21},Y_{11})+I(X_{11};\hat{Y}_R)] 
\nonumber \\
& & + (1-\beta)I(X_{12},X_{R};X_{22},Y_{12})
\\
R_2 & < & \beta I(X_{21};X_{11},Y_{11},\hat{Y}_{R}) 
\nonumber \\
& & +  (1-\beta)I(X_{22};X_{12},X_{R},Y_{12})
\\
R_2 + R_U & < & \beta[I(X_{21},\hat{Y}_{R};X_{11},Y_{11})+I(X_{21};\hat{Y}_R)] 
\nonumber \\
& & + (1-\beta)I(X_{22},X_{R};X_{12},Y_{12})
\\
R_1 + R_2 & < & \beta [I(X_{11},X_{21};Y_{11},\hat{Y}_{R})+I(X_{11};X_{21})] 
\nonumber \\
& & + (1-\beta)I(X_{12},X_{22};X_{R},Y_{12})
\\
R_1 + R_2 + R_U & < & \beta [I(X_{11},X_{21},\hat{Y}_{R};Y_{11})+I(X_{11},X_{21};\hat{Y}_{R})] 
\nonumber \\
&  & + (1-\beta)[I(X_{12},X_{22},X_{R};Y_{12})
\nonumber \\
&  & \qquad + I(X_{12},X_{22};X_{R})],
\end{eqnarray}
}
then $Pr(\cup_{(w_1,w_2)\in\mathcal{A}} \mathcal{E}_{1,(w_1,w_2)}|W_1=1,W_2=1)\rightarrow 0$. Note that since the messages and the codebook have been independently generated, the above inequalities can be further simplified by substituting $I(X_{22};X_R)=0$, $I(X_{12};X_R)=0$, $I(X_{11};X_{21})=0$, $I(X_{12};X_{22})=0$ and $I(X_{12},X_{22};X_R)=0$
In the last, by taking out $R_U$ according to (\ref{eq-rfindu}), the following inequalities define the achievable rates corresponding to $D_1$:
{\setlength\arraycolsep{0.1em}
\begin{eqnarray}
R_i & < &  min \{ \beta I(X_{i1};X_{j1},Y_{11},\hat{Y}_{R}) 
\nonumber \\
& & \qquad + (1-\beta)I(X_{i2};X_{j2},X_{R},Y_{12}),
\nonumber \\
 &  & \qquad\beta[I(X_{i1},\hat{Y}_{R};X_{j1},Y_{11})+I(X_{i1};\hat{Y}_R)
\nonumber \\
& &  \qquad -I(Y_R;\hat{Y}_R)] + (1-\beta)I(X_{i2},X_{R};X_{j2},Y_{12}) \}
\nonumber
\end{eqnarray}
\begin{eqnarray} 
R_1 + R_2 & < & min \{ \beta I(X_{11},X_{21};Y_{11},\hat{Y}_{R}) 
\nonumber \\
& & \qquad + (1-\beta)I(X_{12},X_{22};X_{R},Y_{12}),
\nonumber \\
&  & \qquad\beta [I(X_{11},X_{21},\hat{Y}_{R};Y_{11})+I(X_{11},X_{21};\hat{Y}_{R})\nonumber \\
&  & \qquad -I(Y_R;\hat{Y}_R)] + (1-\beta)I(X_{12},X_{22},X_{R};Y_{12}) \}
\nonumber
\end{eqnarray}
}
where $i=1,2$ and $j=\{1,2 | \;i\neq j\}$.
Similarly the achievable rate results for $D_2$ can be obtained. Therefore, the probability of error $P(\mathcal\epsilon) \rightarrow 0$ if all the achievable inequalities corresponding to $D_1$ and $D_2$ hold. This completes the proof and those achievable inequalities are shown in Theorem 2.

\subsection{Outline of Proof for the CF based Achievable Rate Region}
Due to the space limitation and the similarity for the proof of CF and GQF based achievable rate regions, the detailed proof is omitted in this subsection. Note that there are two major differences in the CF scheme comparing to the GQF scheme: First, after $R$ quantizes the received signal from first slot with a rate $R_U$, it applies the Wyner-Ziv binning to further partition the set of $\mathcal{U}$ into $2^{lR_S}$ equal size bins and send the bin index $S$ with $X_{R}(s)$ in the second slot; Second, each destination performs step decoding for the bin index $\hat{s}$, ${u}$ and $(\hat{w_1},\hat{w_2})$ sequentially. In the last step decoding of the CF scheme, the decoder jointly decodes both messages from the signals received in both slots.

\bibliographystyle{IEEEtran}
\bibliography{reference}

\begin{thebibliography}{10}
\providecommand{\url}[1]{#1}
\csname url@samestyle\endcsname
\providecommand{\newblock}{\relax}
\providecommand{\bibinfo}[2]{#2}
\providecommand{\BIBentrySTDinterwordspacing}{\spaceskip=0pt\relax}
\providecommand{\BIBentryALTinterwordstretchfactor}{4}
\providecommand{\BIBentryALTinterwordspacing}{\spaceskip=\fontdimen2\font plus
\BIBentryALTinterwordstretchfactor\fontdimen3\font minus
  \fontdimen4\font\relax}
\providecommand{\BIBforeignlanguage}[2]{{%
\expandafter\ifx\csname l@#1\endcsname\relax
\typeout{** WARNING: IEEEtran.bst: No hyphenation pattern has been}%
\typeout{** loaded for the language `#1'. Using the pattern for}%
\typeout{** the default language instead.}%
\else
\language=\csname l@#1\endcsname
\fi
#2}}
\providecommand{\BIBdecl}{\relax}
\BIBdecl

\bibitem{Cover1979}
T.~Cover and A.~Gamal, ``Capacity theorems for the relay channel,'' \emph{IEEE
  Trans. Inf. Theory}, vol.~25, no.~5, pp. 572 -- 584, sep 1979.

\bibitem{Kramer2005}
G.~Kramer, M.~Gastpar, and P.~Gupta, ``Cooperative strategies and capacity
  theorems for relay networks,'' \emph{Inf. Theory, IEEE Trans}, vol.~51,
  no.~9, pp. 3037 -- 3063, 2005.

\bibitem{Gunduz2010}
D.~Gunduz, O.~Simeone, A.~Goldsmith, H.~Poor, and S.~Shamai, ``Multiple
  multicasts with the help of a relay,'' \emph{Inf. Theory, IEEE Trans},
  vol.~56, no.~12, pp. 6142 --6158, dec. 2010.

\bibitem{Cover2007}
T.~Cover and Y.-H. Kim, ``Capacity of a class of deterministic relay
  channels,'' in \emph{Information Theory, 2007. ISIT 2007. IEEE International
  Symposium on}, 2007, pp. 591--595.

\bibitem{Razaghi2013}
P.~Razaghi, S.-N. Hong, L.~Zhou, W.~Yu, and G.~Caire, ``Two birds and one
  stone: Gaussian interference channel with a shared out-of-band relay of
  limited rate,'' \emph{Inf. Theory, IEEE Trans}, vol.~59, no.~7, pp.
  4192--4212, 2013.

\bibitem{Avestimehr2011}
A.~Avestimehr, S.~Diggavi, and D.~Tse, ``Wireless network information flow: A
  deterministic approach,'' \emph{Inf. Theory, IEEE Trans}, vol.~57, no.~4, pp.
  1872--1905, 2011.

\bibitem{Lim2011}
S.~Lim, Y.-H. Kim, A.~El~Gamal, and S.-Y. Chung, ``Noisy network coding,''
  \emph{IEEE Trans. Inf. Theory}, vol.~57, no.~5, pp. 3132--3152, 2011.

\bibitem{Wu2010}
X.~Wu and L.-L. Xie, ``On the optimality of successive decoding in
  compress-and-forward relay schemes,'' in \emph{Communication, Control, and
  Computing (Allerton), 2010 48th Annual Allerton Conference on}, 2010, pp.
  534--541.

\bibitem{Wu2013}
------, ``On the optimal compressions in the compress-and-forward relay
  schemes,'' \emph{Information Theory, IEEE Transactions on}, vol.~59, no.~5,
  pp. 2613--2628, 2013.

\bibitem{Hou2012}
J.~Hou and G.~Kramer, ``Short message noisy network coding for multiple
  sources,'' in \emph{Information Theory Proceedings (ISIT), 2012 IEEE
  International Symposium on}, 2012, pp. 1677--1681.

\bibitem{Kramer2011}
G.~Kramer and J.~Hou, ``On message lengths for noisy network coding,'' in
  \emph{Information Theory Workshop (ITW), 2011 IEEE}, 2011, pp. 430--431.

\bibitem{Yao2013}
S.~Yao, T.~Kim, M.~Skoglund, and H.~Poor, ``Half-duplex relaying over slow
  fading channels based on quantize-and-forward,'' \emph{IEEE Trans. Inf.
  Theory}, vol.~59, no.~2, pp. 860--872, 2013.

\bibitem{Laneman2003}
J.~Laneman and G.~Wornell, ``Distributed space-time-coded protocols for
  exploiting cooperative diversity in wireless networks,'' \emph{IEEE Trans.
  Inf. Theory}, vol.~49, no.~10, pp. 2415 -- 2425, oct. 2003.

\bibitem{Khojastepour2003}
M.~Khojastepour, A.~Sabharwal, and B.~Aazhang, ``On capacity of gaussian
  'cheap' relay channel,'' in \emph{Global Telecommunications Conference, 2003.
  GLOBECOM '03. IEEE}, vol.~3, dec. 2003, pp. 1776 -- 1780 vol.3.

\bibitem{Tian2011}
Y.~Tian and A.~Yener, ``The gaussian interference relay channel: Improved
  achievable rates and sum rate upperbounds using a potent relay,'' \emph{IEEE
  Trans. Inf. Theory}, vol.~57, no.~5, pp. 2865--2879, 2011.

\bibitem{Gamal2010}
A.~E. Gamal and Y.-H. Kim, ``Lecture notes on network information theory,''
  \emph{CoRR}, vol. abs/1001.3404, 2010.

\end{thebibliography}

\end{document}